\documentclass[fleqn,runningheads]{llncs}


\usepackage{hyperref}

\usepackage{latexsym}
\usepackage{amssymb}
\usepackage{amsmath}


\usepackage{xcolor}

\definecolor{darkgreen}{RGB}{0,128,0}
\newcommand{\red}[1]{{\color{red} #1}}
\newcommand{\green}[1]{{\color{darkgreen} #1}}
\newcommand{\blue}[1]{{\color{blue} #1}}
\newcommand{\olive}[1]{{\color{olive} #1}}

\newcommand{\his}{\Pi} 

\newcommand{\id}{{\mathit{id}}} 
\newcommand{\true}{{\mathtt{true}}} 

\newcommand{\sleq}{\leqslant}
\newcommand{\ri}{{<\!\!\!<}}

\newcommand{\body}{\mathsf{body}}
\newcommand{\clauses}{\mathsf{clauses}}


\def\defemb#1#2{\expandafter\def\csname #1\endcsname
                              {\relax\ifmmode #2\else\hbox{$#2$}\fi}}
\defemb{cA}{{\cal A}}
\defemb{cB}{{\cal B}}
\defemb{cC}{{\cal C}}
\defemb{cD}{{\cal D}}
\defemb{cH}{{\cal H}}
\defemb{cT}{{\cal T}}
\defemb{cV}{{\cal V}}

\newcommand{\var}{{\cV}ar}
\newcommand{\dom}{{\cD}om}
\newcommand{\nil}{[\:]}

\newcommand{\ol}[1]{\overline{#1}}  

\newcommand{\ott}{\;\mbox{\tt :-}\;}

\def\ri{<\!\!\!<}   

\def\res{\mathrel{\vert\grave{ }}}

\def \tuple#1{\langle #1 \rangle}

\long\def\comment#1{}

\newcommand{\fail}{\mathsf{fail}}
\newcommand{\mgu}{\mathsf{mgu}}

\newcommand{\midd}{\!\mid\!}
\newcommand{\sep}{\bullet}

\newcommand{\cons}{\!:\!}
\newcommand{\fto}{\rightharpoonup}
\newcommand{\bto}{\leftharpoondown}

\newcommand{\choice}{\mathsf{ch}}
\newcommand{\call}{\mathsf{call}}
\newcommand{\choicefail}{\mathsf{fail}}
\newcommand{\backtrack}{\mathsf{bck}}
\newcommand{\ret}{\mathsf{ret}}
\newcommand{\exit}{\mathsf{exit}}
\newcommand{\redo}{\mathsf{redo}}
\newcommand{\nextsol}{\mathsf{next}}
\newcommand{\unf}{\mathsf{unf}}
\newcommand{\clause}{\mathsf{cl}}

\begin{document}

\title{Reversible Debugging in Logic Programming%
  \thanks{This work has been partially supported by grant PID2019-104735RB-C41
funded by MCIN/AEI/ 10.13039/501100011033, by the 
\emph{Generalitat Valenciana} under grant Prometeo/2019/098 
(DeepTrust), and by the COST Action 
  IC1405 on Reversible Computation - extending horizons of computing.}
}

\author{Germ\'an Vidal \orcidID{0000-0002-1857-6951}} 

\institute{
MiST, VRAIN, Universitat Polit\`ecnica de Val\`encia, Spain\\
  \email{gvidal@dsic.upv.es}
}

\maketitle

\begin{abstract}
	Reversible debugging is becoming increasingly popular for
	locating the source of errors. This technique
	proposes a more natural approach to debugging, where one can
	explore a computation from the observable misbehaviour 
	\emph{backwards} to the source of the error. 
	In this work, we propose a reversible debugging scheme for
	logic programs. For this purpose, we define an appropriate 
	instrumented semantics (a so-called Landauer embedding)
	that makes SLD resolution reversible. An
	implementation of a reversible debugger for Prolog, 
	\textsf{rever},
    has been developed and is publicly available.\\[2ex]
    \emph{This paper is dedicated to Manuel Hermenegildo
    on his 60th birthday, 
    for his many contributions to logic programming as well as 
    his energetic leadership within the community. I wish him 
    many springs more to come.}
\end{abstract}

\section{Introduction}
\label{sec:intro}

Reversible debugging allows one to explore a program execution
back and forth. In particular, if one observes a misbehaviour 
in some execution (e.g., a variable that 
takes a wrong value or an unexpected exception), reversible
debugging allows us to analyse the execution backwards 
from this point. This feature is
particularly useful for long executions, where a step-by-step
forward inspection from the beginning of the execution would 
take too much time, or be even impractical.

One can already find a number of tools for reversible debugging
in different programming languages, like Undo \cite{UndoWhitePaper}, 
rr \cite{OJFHNP17} or CauDEr \cite{LPV19}, to name a few.
In this work, we consider reversible debugging in \emph{logic
programming} \cite{Llo87}. In this context, one has to deal 
with two specific features that are not common in other 
programming languages: nondetermism and a bidirectional 
parameter passing mechanism (unification). 

Typically, the \emph{reversibilization} of a (reduction) semantics
can be obtained by instrumenting the states with an appropriate
\emph{Landauer embedding} \cite{Lan61}, i.e., by introducing
a \emph{history} where the information required to undo the
computation steps is stored. 
Defining a Landauer embedding for logic programming is a
challenging task because of nondetermism and unification.
On the one hand, in order to undo backtracking steps, a
deterministic semantics that models the complete traversal
of an SLD tree is required (like the linear operational semantics
introduced in \cite{SESGF11}). On the other hand,
unification is an irreversible operation: given two terms,
$s$ and $t$, with most general unifier $\sigma$, we cannot
obtain $s$ from $t$ and $\sigma$ (nor $t$ from $s$ and $\sigma$).

Let us note that, in this work, we aim at reversibility 
in the sense of
being able to \emph{deterministically} undo the steps of
a computation. In general, (pure) logic programs are 
\emph{invertible} (e.g., the same relation can be used
for both addition and subtraction), but they are not
reversible in the above sense.

This paper extends the preliminary results reported in the
short paper \cite{Vid20}. In particular, our main contributions
are the following:
\begin{itemize}
  \item First, we define a reversible operational semantics
  for logic programs that deals explicitly with backtracking
  steps. In particular, we define both a \emph{forward} and
  a \emph{backward} transition relation that model forward and 
  backward computations, respectively.
  \item Moreover, we state and prove some formal properties
  for our reversible semantics, including the fact that it
  is indeed a conservative extension of the standard
  semantics, that it is deterministic, and that any
  forward computation can be undone.
  \item Finally, we present the design of a reversible
  debugger for Prolog that is based on our reversible
  semantics, and discuss some aspects of the implemented
  tool, the reversible debugger \textsf{rever}.
\end{itemize}
We consider that our work can be useful in the context
of existing techniques for program validation in
logic programming, like run-time verification
(e.g., \cite{SMH14}) or concolic testing 
(e.g., \cite{MPV15}), in order to help locating 
the bugs of a program.

The paper is organised as follows. After introducing some
preliminaries in the next section, we introduce our reversible
operational semantics in Section~\ref{sec:reversible}.
Then, Section~\ref{sec:rever} presents the design of a
reversible debugger based on the previous semantics. Finally, 
Section~\ref{sec:relwork} compares our approach with some
related work and Section~\ref{sec:conclusion} concludes
and points out some directions for further research.

\section{Preliminaries}

In this section, we 
briefly recall some basic notions from logic programming
(see, e.g., \cite{Llo87,Apt97} for more details).  

In this work, we consider a first-order language with a fixed
vocabulary of predicate symbols, function symbols, and variables
denoted by $\Pi$, $\Sigma$ and $\cV$, respectively, 
with $\Sigma\cap\Pi=\emptyset$
and $(\Sigma\cup\Pi)\cap\cV=\emptyset$.
Every element of $\Sigma\cup\Pi$ has an \emph{arity} which is the number
of its arguments. We write $f/n\in \Sigma$ 
(resp.\ $p/n\in \Pi$) to denote
that $f$ (resp.\ $p$) is an element of $\Sigma$ (resp.\ $\Pi$) 
whose arity is $n\geq 0$. A \emph{constant symbol} is an element 
of $\Sigma$ whose arity is 0.
We let $\cT(\Sigma,\cV)$ denote the set of \emph{terms} constructed
using symbols from $\Sigma$ and variables from $\cV$.  

An \emph{atom} has the form $p(t_1,\ldots,t_n)$ with 
$p/n \in \Pi$ and $t_i
\in \cT(\Sigma,\cV)$ for $i = 1,\ldots,n$.  
%
A \emph{query} is a
finite conjunction of atoms which is denoted by a sequence of the form
$A_1,\ldots, A_n$, where the \emph{empty query} is denoted by $\true$.
A \emph{clause} has the form $H\leftarrow B_1,\ldots, B_n$, where $H$
(the \emph{head}) and $B_1,\ldots,B_n$ (the \emph{body}) are atoms,
$n\geq 0$ (thus we only consider \emph{definite} logic programs, i.e.,
logic programs without negated atoms in the body of the
clauses). Clauses with an empty body, $H\leftarrow \true$, are called
\emph{facts}, and are typically denoted by $H$.
In the following, atoms are ranged over by $A,B,C,H,\ldots$ while
queries (possibly empty sequences of atoms) are ranged over by
$\cA,\cB,\ldots$

$\var(s)$ denotes the set of variables in the syntactic object $s$
(i.e., $s$ can be a term, an atom, a query, or a clause).  A syntactic
object $s$ is \emph{ground} if $\var(s)=\emptyset$. In this work, we
only consider \emph{finite} ground terms.

Substitutions and their operations are defined as usual; 
they are typically denoted by (finite) sets of bindings like,
e.g., $\{x_1/s_1,\ldots,x_n/ s_n\}$. We let $\id$ denote 
the identity substitution. Substitutions are
ranged over by $\sigma,\theta,\ldots$
In particular, the set $\dom(\sigma) = \{x \in \cV \mid \sigma(x) \neq
x\}$ is called the \emph{domain} of a substitution $\sigma$. 
Composition of substitutions is denoted by juxtaposition,
i.e., $\sigma\theta$ denotes a substitution $\gamma$ such that
$\gamma(x) = \theta(\sigma(x))$ for all $x\in\cV$. We
follow a postfix notation for substitution application: given a
syntactic object $s$ and a substitution $\sigma$ the application 
$\sigma(s)$ is denoted by $s\sigma$.
The \emph{restriction} $\theta\!\res_V$ of a substitution $\theta$ to a
set of variables $V$ is defined as follows: $x\theta\!\res_{V} =
x\theta$ if $x\in V$ and $x\theta\!\res_V = x$ otherwise. We say that
$\theta = \sigma~[V]$ if $\theta\!\res_V = \sigma\!\res_V$.

A syntactic object $s_1$ is \emph{more general} than a syntactic
object $s_2$, denoted $s_1 \sleq s_2$, if there exists a substitution
$\theta$ such that $s_2 = s_1\theta$. A \emph{variable renaming} is a
substitution that is a bijection on $\cV$. Two syntactic objects $t_1$
and $t_2$ are \emph{variants} (or equal up to variable renaming),
denoted $t_1 \approx t_2$, if $t_1 = t_2\rho$ for some variable renaming
$\rho$. A substitution $\theta$ is a unifier of two syntactic objects
$t_1$ and $t_2$ iff $t_1\theta = t_2\theta$; furthermore, $\theta$ is
the \emph{most general unifier} of $t_1$ and $t_2$, denoted by
$\mgu(t_1,t_2)$ if, for every other unifier $\sigma$ of $t_1$ and
$t_2$, we have that $\theta \sleq \sigma$. 

A logic \emph{program} is a finite sequence of clauses.  Given a
program $P$, we say that $A,\cB' \leadsto_{P,\sigma} (\cB,\cB')\sigma$
is an \emph{SLD resolution step}\footnote{In this paper, we only
  consider Prolog's \emph{computation rule}, so that the selected atom
  in a query is always the leftmost one.} if $H \leftarrow \cB$ is a
renamed apart clause (i.e., with fresh variables) of program $P$, in
symbols, $H \leftarrow \cB \ri P$, and $\sigma = \mgu(A,H)$.
The subscript $P$ will often be omitted when the program is clear from
the context.  An \emph{SLD derivation} is a (finite or infinite)
sequence of SLD resolution steps. 
As is common, $\leadsto^\ast$ denotes the reflexive and 
transitive closure of $\leadsto$. In particular, we denote by
$\cA_0 \leadsto^\ast_\theta \cA_n$ a derivation 
\[
\cA_0 \leadsto_{\theta_1} \cA_1 \leadsto_{\theta_2}
\ldots \leadsto_{\theta_n} \cA_n
\]
where $\theta =
\theta_1\ldots\theta_n$ if $n>0$ (and $\theta =\id$ otherwise).

An SLD derivation is called \emph{successful} if it ends with 
the query $\true$, and it is called
\emph{failed} if it ends in a query where the leftmost atom 
does not unify with the head of any clause. 
Given a successful SLD derivation
$\cA \leadsto^\ast_\theta \true$, the associated \emph{computed
answer}, $\theta\!\res_{\var(\cA)}$, is the restriction of $\theta$
to the variables of the initial query $\cA$.
SLD derivations are represented by a (possibly infinite) finitely
branching tree, which is called \emph{SLD tree}. Here, \emph{choice
  points} (queries with more than one child) correspond to queries
where the leftmost atom unifies with the head of more than one program
clause.

\begin{example} \label{ex:running}
Consider the following (labelled)
 logic program:\footnote{We consider Prolog notation in examples (so variables start with an uppercase letter). Clauses are labelled 
with a unique identifier of the form $\ell_i$.}
\[
  \begin{array}{l@{~}l@{~~~~~~~~}l@{~}l}
  	\ell_1: & \tt p(X,Y) \ott q(X),r(X,Y). \\
  	\ell_2: & \tt q(a). 
  	& \ell_5: & \tt r(b,b). \\
  	\ell_3: & \tt q(b). 
  	& \ell_6: & \tt r(b,c). \\
  	\ell_4: & \tt q(c). 
  	& \ell_7: & \tt r(c,c). \\
  \end{array}
\]
Given the query $\tt p(X,Y)$, we have, e.g.,  the following
(successful) SLD derivation:
\[
  	\begin{array}{lll}
  		\tt p(A,B) 
  		& \leadsto_{\tt \{X/A,Y/B\}} & \tt q(A),r(A,B) \\
  		& \leadsto_{\tt \{A/b\}} & \tt r(b,B) \\
  		& \leadsto_{\tt \{B/c\}} & \tt \true \\
  	\end{array}
\]
with computer answer $\tt \{A/b,B/c\}$.
\end{example}

\section{A Reversible Semantics for Logic Programs} 
\label{sec:reversible}

In this section, we present a reversible semantics for logic programs
that constitutes a good basis to implement a reversible 
debugger for Prolog (cf.~Section~\ref{sec:rever}).
In principle, one of the main challenges for
defining a reversible version of SLD resolution is dealing
with unification, since it is an irreversible operation.
E.g., given the SLD resolution step
\[
\tt p(X,a),q(a) \leadsto_{\{X/a,Y/a\}} q(a),q(a)
\]
using clause $\tt p(a,Y) \ott q(Y)$, 
there is no deterministic
way to get back the query $\tt p(X,a),q(a)$ from 
the query $\tt q(a),q(a)$, the computed 
mgu $\tt \{X/a,Y/a\}$, and the applied clause. 
For instance, one could obtain the query
$\tt p(X,X),q(X)$ since the following SLD resolution step
\[
\tt p(X,X),q(X) \leadsto_{\{X/a,Y/a\}} q(a),q(a)
\]
is also possible using the same clause and computing the 
same mgu.

In order to overcome this problem, \cite{Vid20} proposed
a reversible semantics where 
\begin{itemize}
  \item computed mgu's are not applied 
to the atoms of the query, and
  \item the selected call at each SLD resolution step
  is also stored.
\end{itemize}
Queries are represented as pairs  
$\tuple{\cA;[(A_n,H_n,m_n),\ldots,(A_1,H_1,m_1)]}$, 
where the first component is a sequence of atoms
(a query), and the
second component stores, for each SLD resolution step
performed so far,
the selected atom ($A_i$), the head of the selected clause ($H_i$),
and the number of atoms in the body of this clause ($m_i$).
Here, mgu's are not stored explicitly but can be inferred
from the pairs $(A_i,H_i)$. The number $m_i$ is used to
determine the number of atoms in the current query that
must be removed when performing a backward step.
A reversible SLD resolution step has then the form\footnote{Here, 
$(A,H,m)\cons\cH$
denotes a list with head element $(A,H,m)$ and tail $\cH$.} 
\[
\tuple{A,\cB;\cH} \fto
\tuple{B_1,\ldots,B_m,\cB;
(A,H,m)\cons\cH}
\]
if there exists a clause $H \leftarrow B_1,\ldots,B_m\ri P$
and $\mgu(A\sigma,H) \neq \fail$, where $\sigma$ is the substitution
obtained from $\cH$ by computing the mgu's associated to
each triple ($A_i,H_i,m)$ in $\cH$ and, then, composing them.
A simple proof-of-concept implementation that follows this
scheme can be found at
\url{https://github.com/mistupv/rever/tree/rc2020}.

The proposal in \cite{Vid20}, however, suffers from
several drawbacks:
\begin{itemize}
  \item First, it is very inefficient, since one should
  compute the mgu's of each SLD resolution step once and
  again. This representation was chosen in \cite{Vid20} for
  clarity and, especially, because it allowed us to easily implement 
  it without
  using a ground representation for queries and programs,
  so that there was no need to reimplement all basic 
  operations (mgu, substitution application and
  composition, etc).
  
  \item The second drawback is that the above definition 
  of reversible SLD resolution
  cannot be used to undo a \emph{backtracking} step, since the
  structure of the SLD tree is not explicit in the 
  considered semantics. 
\end{itemize}
In the following, we introduce a reversible operational 
semantics for logic programs that overcomes the above
shortcomings.

\subsection{A Deterministic Operational Semantics}

First, we present a deterministic semantics (inspired by the
linear operational semantics of \cite{SESGF11}) that deals
explicitly with backtracking.

Our semantics is defined as a transition relation on
states. In the following, queries are represented
as pairs $\tuple{\cA;\theta}$ instead of $\cA\theta$,
where $\theta$ is the composition of the mgu's computed
so far in the derivation. This is needed in order to
avoid undoing the application of mgu's, which is
an irreversible operation. 

\begin{definition}[state] \label{def:state}
  A \emph{state} is denoted by a sequence 
  $Q_1\midd Q_2\midd \ldots \midd Q_n$, where each
  $Q_i$ is a (possibly labelled) query of the form 
  $\tuple{\cB;\theta}$.
  In some cases, a query $Q$ is labelled with a clause label, e.g.,
   $Q^\ell$, which will be used to denote that the query $Q$ 
  can be unfolded with the clause labelled with $\ell$ (see below). 
\end{definition}
A state will often
be denoted by $\tuple{\cB;\theta}\midd S$ so that 
$\tuple{\cB;\theta}$ 
is  the first query of the sequence and $S$ denotes 
a (possibly empty) sequence of queries. 
In the following, an empty sequence is denoted by $\epsilon$. 

\begin{figure}[t]
  \[
  \hspace{-2ex}\begin{array}{r@{~}l}
    \mathsf{(backtrack)} & {\displaystyle 
      \frac{S\neq\epsilon} 
        {\tuple{\fail,\cB;\theta} \midd S \to 
        S
        }}
        \hspace{3ex}

    \mathsf{(next)} ~ {\displaystyle 
      \frac{S\neq\epsilon} 
        {\tuple{\true;\theta}\midd S \to S
        }}\\[5ex]

     \textsf{(choice)} &  {\displaystyle 
      \frac{
      A\neq\fail\wedge A\neq\true \wedge \clauses(A\theta,P) = \{\ell_1,\ldots,\ell_m\}\wedge m>0
      } 
        {\tuple{A,\cB;\theta} \midd S 
        \to
          \tuple{A,\cB;\theta}^{\ell_1} \midd 
          \ldots\midd
          \tuple{A,\cB;\theta}^{\ell_m}\midd  S
        }
        } \\[4ex]

     \textsf{(choice\_fail)} &  {\displaystyle 
      \frac{
      A\neq\fail\wedge A\neq\true \wedge \clauses(A\theta,P) = \emptyset} 
        {\tuple{A,\cB;\theta}\midd S 
        \to 
          \tuple{\fail,\cB;\theta}\midd  S}
        } \\[4ex]

    \mathsf{(unfold)} & {\displaystyle 
      \frac{\clause(\ell,P) = H\leftarrow B_1,\ldots,B_n
 	\wedge \mgu(A\theta,H)=\sigma}
        {\tuple{A,\cB;\theta}^{\ell}
        \midd S
    \to \tuple{B_1,\ldots,B_n,\cB;\theta\sigma}\midd S}
        } 
    \end{array}
    \]
  \caption{A deterministic operational semantics} 
  \label{fig:deterministic}
\end{figure}

In this paper, we consider that program clauses are labelled,
so that each label uniquely identifies a program clause. Here, we
use the auxiliary function $\clauses(A,P)$ to obtain the labels
of those clauses in program $P$ whose heads unify with atom $A$,
i.e., 
\[
\clauses(A,P) = \{\ell \mid \ell:H\leftarrow \cB\ri P
\wedge \mgu(A,H)\neq \fail\}
\]
and $\clause(\ell,P)$ to get a
renamed apart variant of the clause labelled with $\ell$, i.e.,
$\clause(\ell,P) = (H\leftarrow\cB)\vartheta$ if 
$\ell:H\leftarrow\cB \in P$ and $\vartheta$ is a variable
renaming with fresh variables.

The rules of the semantics can be found in 
Figure~\ref{fig:deterministic}. An \emph{initial state}
has the form $\tuple{A,\cB;\id}$, where $A$ is an atom,
$\cB$ is a (possibly empty) sequence of atoms,
and $\id$ is the identity substitution. 
Initially, one can either apply rule \textsf{choice} or
\textsf{choice\_fail}. Let us assume that $A$ unifies with 
the head of some clauses, say $\ell_1,\ldots,\ell_m$. 
Then, rule \textsf{choice} derives a new state by replacing
$\tuple{A,\cB;\id}$ with $m$ copies
 labelled with $\ell_1,\ldots,\ell_m$: 
 \[
\tuple{A,\cB;\id} \to  \tuple{A,\cB;\id}^{\ell_1}\midd\ldots
 \midd\tuple{A,\cB;\id}^{\ell_m}
 \]
Now, let assume that $\clause(\ell_1,P)$ returns
$H\leftarrow B_1,\ldots,B_n$. Then, rule \textsf{unfold}
applies so that the following state is derived:
\[
\tuple{B_1,\ldots,B_n,\cB;\sigma}\midd \tuple{A,\cB;\id}^{\ell_2}
\midd \ldots\midd \tuple{A,\cB;\id}^{\ell_m}
\]
Let us consider now that $B_1\sigma$ does not match 
any program clause, i.e., we have $\clauses(B_1\sigma,P) = \emptyset$.
Then, rule \textsf{choice\_fail} applies
and the following state is derived: 
\[
\tuple{\fail,B_2,\ldots,B_n,\cB;\sigma}\midd 
\tuple{A,\cB;\id}^{\ell_2}\midd \ldots\midd 
\tuple{A,\cB;\id}^{\ell_m}
\]
Then, rule \textsf{backtrack} applies so that we jump
to a choice point with some pending alternative (if any).
In this case, we derive the state
\[
\tuple{A,\cB;\id}^{\ell_2}\midd \ldots
\midd \tuple{A,\cB;\id}^{\ell_m}
\]
so that unfolding with clause $\ell_2$ is tried now,
and so forth. 

Here, we say that a derivation is 
\emph{successful} if the last state has the
form $\tuple{\true;\theta}\midd S$. 
We have also included a rule called \textsf{next} to
be able to reach all solutions of an SLD tree (which 
has a similar effect as rule \textsf{backtrack}). 
Therefore, $\theta$ is not necessarily the first
computed answer, but an arbitrary one (as long as
it is reachable from the initial state after a finite
number of steps).

A computation is \emph{failed} if it ends with 
a state of the form $\tuple{\fail,\cB;\theta}$, so
no rule is applicable (note that rule \textsf{backtrack}
is not applicable when there is a single query in 
the state). 

\begin{example} \label{ex:running2}
Consider the program of Example~\ref{ex:running} and the
same initial query: $\tt \tuple{p(X,Y);\id}$. In order to 
reach the same computed answer, $\tt \{A/b,B/c\}$, we now
perform the following (deterministic) 
derivation:\footnote{For clarity, we only show the bindings
for the variables in the initial query. Moreover, the steps
are labelled with the applied rule.
}
\[
\begin{array}{lll}
 \tt \tuple{p(A,B);\red{\id}}
 & \to_{\sf choice} & \tt \tuple{p(A,B);\red{\id}}^{\ell_1}  \\
 & \to_{\sf unfold} & \tt \tuple{q(A),r(A,B);\red{\id}}\\  
 & \to_{\sf choice} & \tt \tuple{q(A),r(A,B);\red{\id}}^{\ell_2}
 \midd \tuple{q(A),r(A,B);\red{\id}}^{\ell_3}\midd
 \tuple{q(A),r(A,B);\red{\id}}^{\ell_4}\\  
 & \to_{\sf unfold} & \tt \tuple{r(A,B);\red{\{A/a\}}}
 \midd \tuple{q(A),r(A,B);\red{\id}}^{\ell_3}\midd
 \tuple{q(A),r(A,B);\red{\id}}^{\ell_4}\\  
 & \to_{\sf choice\_fail} & \tt \tuple{\fail;\red{\{A/a\}}}
 \midd \tuple{q(A),r(A,B);\red{\id}}^{\ell_3}\midd
 \tuple{q(A),r(A,B);\red{\id}}^{\ell_4}\\  
 & \to_{\sf backtrack} & \tt \tuple{q(A),r(A,B);\red{\id}}^{\ell_3}\midd
 \tuple{q(A),r(A,B);\red{\id}}^{\ell_4}\\  
 & \to_{\sf unfold} & \tt \tuple{r(A,B);\red{\{A/b\}}}\midd
 \tuple{q(A),r(A,B);\red{\id}}^{\ell_4}\\  
 & \to_{\sf choice} & \tt \tuple{r(A,B);\red{\{A/b\}}}^{\ell_5}\midd
 \tuple{r(A,B);\red{\{A/b\}}}^{\ell_6}\midd
 \tuple{q(A),r(A,B);\red{\id}}^{\ell_4}\\  
 & \to_{\sf unfold} & \tt \tuple{\true;\red{\{A/b,B/b\}}}\midd
 \tuple{r(A,B);\red{\{A/b\}}}^{\ell_6}\midd
 \tuple{q(A),r(A,B);\red{\id}}^{\ell_4}\\  
 & \to_{\sf next} & \tt 
 \tuple{r(A,B);\red{\{A/b\}}}^{\ell_6}\midd
 \tuple{q(A),r(A,B);\red{\id}}^{\ell_4}\\  
 & \to_{\sf unfold} & \tt 
 \tuple{\true;\red{\{A/b,B/c\}}}\midd
 \tuple{q(A),r(A,B);\red{\id}}^{\ell_4}\\  
\end{array}
\]
with computer answer $\tt \{A/b,B/c\}$.
\end{example}
Clearly, the semantics in 
Figure~\ref{fig:deterministic} is deterministic. In the following,
we assume that a fixed program $P$ is considered for stating formal
properties.

\begin{theorem}
  Let $S$ be a state. Then, at most one rule from the semantics
  in Figure~\ref{fig:deterministic} is applicable.	
\end{theorem}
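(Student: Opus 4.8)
The plan is to proceed by a straightforward case analysis on the syntactic shape of the state $S$, and in particular of its first query. First I would observe that the left-hand sides of the five rules in Figure~\ref{fig:deterministic} are pairwise non-overlapping once we fix three pieces of information about $S$: (i)~whether its first query carries a clause label or not; (ii)~if it is unlabelled, whether its leftmost atom is $\true$, $\fail$, or a genuine atom $A$ distinct from both; and (iii)~in the last subcase, the value of $\clauses(A\theta,P)$.

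Concretely, I would split as follows. If the first query of $S$ is labelled, say $\tuple{A,\cB;\theta}^{\ell}$, then only rule \textsf{unfold} can fire: the left-hand sides of \textsf{backtrack}, \textsf{next}, \textsf{choice} and \textsf{choice\_fail} all match unlabelled queries, so none of them is applicable; whether \textsf{unfold} actually applies depends on the side condition $\mgu(A\theta,H)=\sigma$, but in any case at most one rule is enabled. If the first query is unlabelled, of the form $\tuple{\cB;\theta}$, then $\cB$ is either the empty query $\true$, or has the form $\fail,\cB'$, or has the form $A,\cB'$ with $A$ an atom different from $\true$ and $\fail$ (note that, by inspection of the rules, $\fail$ can only occur as the leftmost element of a query, since it is introduced solely by \textsf{choice\_fail} and removed solely by \textsf{backtrack}). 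In the first case only \textsf{next} is a candidate, in the second only \textsf{backtrack}, and these two are mutually exclusive since their left-hand sides require the leftmost position to be $\true$ and $\fail$, respectively. In the third case the only candidates are \textsf{choice} and \textsf{choice\_fail}, whose side conditions $\clauses(A\theta,P)\neq\emptyset$ and $\clauses(A\theta,P)=\emptyset$ are complementary; hence at most one of them applies. In every case, at most one rule is applicable, which proves the claim.

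The only real subtlety — and the single point I would spell out carefully — is the convention that labelled and unlabelled queries are to be regarded as syntactically distinct, so that the patterns $\tuple{\fail,\cB;\theta}$, $\tuple{\true;\theta}$ and $\tuple{A,\cB;\theta}$ appearing in rules \textsf{backtrack}, \textsf{next}, \textsf{choice} and \textsf{choice\_fail} do not match a labelled query $Q^{\ell}$; this is what cleanly separates \textsf{unfold} from the other four rules. The side conditions involving $S\neq\epsilon$ play no role in the uniqueness argument (they may disable a rule, but never cause two rules to be simultaneously enabled), so I would mention them only in passing. I do not expect any genuine obstacle here; the proof is essentially an inspection of the rule premises, organised by the three-way distinction above.
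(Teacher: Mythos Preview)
Your proposal is correct and follows essentially the same approach as the paper: a case distinction on whether the first query is labelled (only \textsf{unfold} applies), and if not, on whether its leftmost atom is $\fail$, $\true$, or an ordinary atom (in which case \textsf{backtrack}, \textsf{next}, or exactly one of \textsf{choice}/\textsf{choice\_fail} applies, respectively). Your write-up is simply more explicit than the paper's---in particular your remarks on the labelled/unlabelled distinction and on the irrelevance of the $S\neq\epsilon$ conditions---but the underlying argument is identical.
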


\begin{proof}
  The proof is straightforward since the conditions of the
  rules do not overlap:
  \begin{itemize}
  \item If the leftmost query is not headed  by 
  $\true$ nor $\fail$ and
  the query is not labelled, only rule \textsf{choice} and
  \textsf{choice\_fail} are applicable, and the conditions
  trivially do not overlap.
  \item If the leftmost query is labelled, only rule \textsf{unfold}
  is applicable.
  \item Finally, if the leftmost query is headed by $\fail$
  (resp.\ $\true$) then only rule \textsf{backtrack}
  (resp.\ \textsf{next}) is applicable.
\end{itemize}
\end{proof}
Now, we prove that the deterministic operational semantics is
sound in the sense
that it explores the SLD tree of a query following
Prolog's depth-first search strategy: 

\begin{theorem}
  Let $\tuple{\cA;\id}$ be an initial state. If
  $\tuple{\cA;\id} \to^\ast \tuple{\true;\theta}\midd S$,
  then $\cA \leadsto^\ast_{\theta} \true$, up to variable 
  renaming.
\end{theorem}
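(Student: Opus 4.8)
The plan is to prove a slightly stronger invariant by induction on the length of the derivation in the deterministic semantics, and then read the theorem off the invariant applied to the final state. Concretely, I would establish the following.

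\smallskip
\noindent\emph{Claim.} Whenever $\tuple{\cA;\id}\to^\ast S$ and $\tuple{\cB;\sigma}$ (labelled or not) is a query occurring in $S$ whose leftmost atom is not $\fail$, then $\cA \leadsto^\ast_\sigma \cB\sigma$, up to variable renaming.
\smallskip

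Queries headed by $\fail$ are deliberately excluded: such a query is only ever created by \textsf{choice\_fail}, and the only rule applicable to it is \textsf{backtrack}, which simply discards it; hence it can never turn into $\tuple{\true;\theta'}$ and plays no role in a successful derivation. Granting the Claim, the theorem is immediate: a successful derivation ends in a state $\tuple{\true;\theta}\midd S$, and instantiating the Claim with its leftmost query yields $\cA \leadsto^\ast_\theta \true\theta = \true$, up to variable renaming, as required.

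For the induction, the base case $S=\tuple{\cA;\id}$ is witnessed by the empty SLD derivation $\cA\leadsto^\ast_{\id}\cA$ (note that the leftmost element of an initial state is a genuine atom, not $\fail$). In the inductive step I would proceed by case analysis on the last rule applied. Rules \textsf{backtrack} and \textsf{next} only delete the leftmost query, and \textsf{choice} replaces a query by labelled copies of itself; in all these cases every surviving query already occurred (up to the label) in the previous state, so the induction hypothesis transfers directly, and \textsf{choice\_fail} produces a query headed by $\fail$, on which the Claim imposes nothing. The only substantive case is \textsf{unfold}, where $\tuple{A,\cB;\theta}^{\ell}\midd S \to \tuple{B_1,\dots,B_n,\cB;\theta\sigma}\midd S$ with $\clause(\ell,P)=H\leftarrow B_1,\dots,B_n$ and $\sigma=\mgu(A\theta,H)$. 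By the induction hypothesis there is an SLD derivation $\cA\leadsto^\ast_\theta A\theta,\cB\theta$ (up to renaming); since $\clause(\ell,P)$ is a variant of the $\ell$-th clause with fresh variables, in particular disjoint from $\var(A\theta,\cB\theta)$, this derivation can be extended by the SLD resolution step $A\theta,\cB\theta \leadsto_\sigma (B_1,\dots,B_n,\cB\theta)\sigma$. Using that the $B_i$ contain only fresh variables (so $B_i\theta = B_i$, hence $B_i\sigma = B_i\theta\sigma$), the query reached is $(B_1,\dots,B_n)\sigma,\cB\theta\sigma = (B_1,\dots,B_n,\cB)\,\theta\sigma$ and the accumulated substitution is $\theta\sigma$, which is exactly what the Claim asserts for the new query $\tuple{B_1,\dots,B_n,\cB;\theta\sigma}$; the remaining queries of $S$ are untouched.

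The delicate point, and the one I would treat most carefully, is the variable-renaming bookkeeping in the \textsf{unfold} case: one must justify that the renamed-apart clause returned by $\clause(\ell,P)$ is a legitimate input clause of the corresponding SLD step (i.e. variable-disjoint from the current query), and one must track how the "up to variable renaming" slack from the induction hypothesis is carried along — different choices of fresh variables in the two derivations are reconciled by post-composing with a suitable renaming, and the existence of $\mgu(A\theta,H)$ transfers to any variants of $A\theta$ and $H$. Everything else is routine substitution algebra (associativity of composition, $\cB\theta\sigma=\cB(\theta\sigma)$, disjointness of the fresh body variables from $\dom(\theta)$), and no argument about the selected atom is needed, since both formalisms fix the leftmost-atom computation rule.
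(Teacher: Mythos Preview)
Your proposal is correct and follows essentially the same approach as the paper: both strengthen the statement to an invariant asserting that every non-$\fail$-headed query $\tuple{\cB;\sigma}$ in a reachable state corresponds to an SLD derivation $\cA\leadsto^\ast_\sigma\cB\sigma$, and both prove this by induction on the length of the deterministic derivation with a case analysis on the last rule, the only non-trivial case being \textsf{unfold}. The paper's invariant is phrased slightly more generally (starting from $\tuple{\cA;\sigma}$ rather than $\tuple{\cA;\id}$), but this extra generality is not exploited, and your treatment of the renaming bookkeeping in the \textsf{unfold} case is in fact more detailed than the paper's.
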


\begin{proof}
  Here, we prove a more general claim. Let us consider an arbitrary
  query, $\tuple{\cA;\sigma}$ with $\tuple{\cA;\sigma} \to^\ast
  Q_1\midd\ldots\midd Q_m$, where $Q_i$ is either
  $\tuple{\cB_i;\sigma\theta_i}$ or 
  $\tuple{\cB_i;\sigma\theta_i}^{\ell_i}$, $i=1,\ldots,m$.
  Then, we have $\cA\sigma 
  \leadsto^\ast_{\theta_i} \cB_i\sigma\theta_i$ for
  all $i=1,\ldots,m$ such that $\cB_i \neq (\fail,\cB')$ for
  some $\cB'$, up to variable renaming. We exclude the queries with
  $\fail$ since failures are not made explicit in the definition
  of SLD resolution (this is just a device of our deterministic
  semantics to point out that either
  a backtracking step should be performed next or the derivation
  is failed).
  
  We prove the claim by induction on the number $n$ of steps in
  the former derivation:
  $\tuple{\cA;\sigma} \to^\ast
  Q_1\midd 
  \ldots \midd Q_m$.
  Since the base case ($n=0$) is trivial, let us consider
  the inductive case ($n>0$). Here, we assume a derivation of 
  $n+1$ steps from $\tuple{\cA;\sigma}$. 
  By the induction hypothesis, we have $\cA\sigma 
  \leadsto^\ast_{\theta_i} \cB_i\sigma\theta_i$ for
  all $i=1,\ldots,m$ such that $\cB_i \neq (\fail,\cB')$ for
  some $\cB'$.
  We now distinguish several possibilities depending on the
  applied rule to the state $Q_1\midd\ldots \midd Q_m$:
  \begin{itemize}
  \item If the applied rule is \textsf{backtrack} or \textsf{next},
  we have 
  \[ 
  Q_1\midd Q_2\midd\ldots \midd Q_m 
  \to Q_2\midd\ldots\midd Q_m
  \]
  and the claim trivially holds 
  by the induction hypothesis.
  \item If the applied rule is \textsf{choice}, we have
  \[
  Q_1\midd\ldots \midd Q_m \to Q_1^{\ell_1}\midd \ldots\midd 
  Q_1^{\ell_k}\midd Q_2\midd \ldots \midd Q_m
  \]
  for some $k>0$, and the claim also follows trivially from
  the induction hypothesis.
  \item If the applied rule is \textsf{choice\_fail},
  the claim follows immediately by the induction hypothesis since
  a query of the form $(\fail,\cB')$ is not considered.
  \item Finally, let us consider that the applied rule 
  is \textsf{unfold}. Let $Q_1 = \tuple{A,\cB;\sigma\theta_1}^{\ell_1}$.
  Then, we have
  \[
    \tuple{A,\cB;\sigma\theta_1}^{\ell_1}\midd Q_2\midd\ldots\midd Q_m
    \to
    \tuple{\cB',\cB;\sigma\theta_1\theta'}\midd Q_2\midd\ldots\midd Q_m 
  \]
  if $\clause(\ell_1,P)= H\leftarrow \cB'$ and 
  $\mgu(A\sigma\theta_1,H)=\theta'$. Then, we also have
  an SLD resolution step of the form
  $(A,\cB)\sigma\theta_1 \leadsto_{\theta'} (\cB',\cB)\sigma\theta_1\theta'$ using 
  the same clause\footnote{For simplicity, we assume that 
  the same renamed clauses are considered in both derivations.} 
  and computing 
  the same mgu and, thus, the claim follows from the induction 
  hypothesis.
  \end{itemize}
\end{proof}
Note that the deterministic semantics is sound but 
\emph{incomplete} in general
since it implements a depth-first search strategy.

\subsection{A Reversible Semantics} \label{sec:reversiblesemantics}

Now, we extend the deterministic operational semantics of 
Figure~\ref{fig:deterministic} in order to make it reversible.
Our reversible semantics is defined on \emph{configurations}:

\begin{definition}[configuration]
A \emph{configuration} is defined as a pair
$S \sep \his$ where $S$ is a state (as defined in 
Definition~\ref{def:state}) and $\his$ is a list representing 
the history of the configuration. 
Here, we consider the following history events: 
\begin{itemize}
  \item $\choice(n)$: denotes a choice step with $n$ branches;
  \item $\unf(A,\theta,\ell)$:  represents an unfolding step
  where the selected atom is $A$, the answer computed so far
  is $\theta$, and the selected clause is labelled with $\ell$;
  \item $\fail(A)$: is associated to rule 
  \textsf{choice\_fail} and denotes that the selected atom $A$
  matches no rule;
  \item $\exit(A)$: denotes that the execution of atom $A$
  has been completed (see below);
  \item $\backtrack(\cB,\theta)$: represents a backtracking step,
  where $\tuple{\fail,\cB;\theta}$ is the query that failed;
  \item $\nextsol(\theta)$: denotes an application of rule
  $\nextsol$ after an answer $\theta$ is obtained.   
\end{itemize}
We use Haskell's notation for lists and denote by $s\cons\his$ 
a history with first element $s$ and tail $\his$; an empty
history is denoted by $\nil$.
\end{definition}

\begin{figure}[t]
  \[
  \hspace{-2ex}\begin{array}{r@{~}l}
    \mathsf{(backtrack)} & {\displaystyle 
      \frac{} 
        {\tuple{\fail,\cB;\theta}\midd 
        \tuple{A,\cB';\theta'}\midd S\sep\his 
        \fto 
        \tuple{A,\cB';\theta'}\midd S\sep
        \backtrack(\cB,\theta)
        \cons\his {}_\olive{\redo(A\theta')}}
        }\\[4ex]

    \mathsf{(next)} & {\displaystyle 
      \frac{S\neq\epsilon} 
        {\tuple{\true;\theta}\midd S\sep
        \his {}_{\sf answer(\theta)} \fto S\sep
        \nextsol(\theta)\cons\his}
        }
        \\[4ex]

     \textsf{(choice)} &  {\displaystyle 
      \frac{A\neq\true\wedge A\neq\fail\wedge A\neq\ret(A')\wedge
      \clauses(A\theta,P) = \{\ell_1,\ldots,\ell_m\}\wedge m>0} 
        {\tuple{A,\cB;\theta}\midd S\sep
        \his {}_\green{\call(A\theta)} 
        \fto
          \tuple{A,\cB;\theta}^{\ell_1} \midd 
          \ldots\midd
          \tuple{A,\cB;\theta}^{\ell_m}\midd  S\sep
          \choice(m)\cons\his}
        } \\[4ex]

     \textsf{(choice\_fail)} &  {\displaystyle 
      \frac{A\neq\true\wedge A\neq\fail\wedge A\neq\ret(A')\wedge
      \clauses(A\theta,P) = \emptyset } 
        {\tuple{A,\cB;\theta}\midd S\sep
         \his {}_\green{\call(A\theta)} 
        \fto
          \tuple{\fail,\cB;\theta}\midd  S\sep
          \choicefail(A)\cons\his {}_\red{\fail(A\theta)}}
        } \\[4ex]

    \mathsf{(unfold)} & {\displaystyle 
      \frac{A\neq\ret(A')\wedge \clause(\ell,P) = H\leftarrow B_1,\ldots,B_n
 	\wedge \mgu(A\theta,H)=\sigma}
        {\tuple{A,\cB;\theta}^{\ell}
        \midd S\sep\his
    \fto \tuple{B_1,\ldots,B_n,\ret(A),\cB;\theta\sigma}\midd S\sep
          \unf(A,\theta,\ell)\cons\his}
        } \\[4ex]

    \mathsf{(exit)} & {\displaystyle 
      \frac{}
        {\tuple{\ret(A),\cB;\theta}
        \midd S\sep\his {}_\green{\exit(A\theta)}
    \fto \tuple{\cB;\theta}
    \midd S\sep \exit(A)\cons\his}
    } \\ 
    \end{array}
    \]
  \caption{Forward reversible semantics} 
  \label{fig:forward}
\end{figure}

The reversible (forward) semantics is shown in 
Figure~\ref{fig:forward}.\footnote{The subscripts of some
configurations: \textsf{call}, \textsf{exit}, \textsf{fail}, 
\textsf{redo}, and \textsf{answer}, can be ignored for now.
They will become useful in the next section.} 
The rules of the reversible semantics are 
basically self-explanatory. They are essentially the same as in the
standard deterministic semantics of Figure~\ref{fig:deterministic}
except for the following differences:
\begin{itemize}
\item First, configurations now keep a \emph{history}
with enough information for undoing the steps of a computation.
\item And, secondly, unfolding an atom $A$ now
adds a new call of the form $\ret(A)$ after the atoms of the
body (if any) of the considered program clause. 
This is then used in rule \textsf{exit} in order to determine when the 
call has been completed successfully ($\ret(A)$ marks the exit of 
a program clause). This extension is not introduced for reversibility,
but it is part of the design of our reversible debugger 
(see Section~\ref{sec:rever}, where the reversible debugger 
\textsf{rever} is presented). Here, and in the following, 
we assume that programs contain no
predicate named $\ret/1$.
\end{itemize}
We note that extending our developments to SLD resolution with an
arbitrary computation rule (i.e., different from Prolog's rule, which
always selects the leftmost atom) is not difficult. Basically, one
would only need to extend the $\unf$ elements as follows: 
$\unf(A,\theta,i,\ell)$,
where $i$ is the position of the selected atom in the current
query.

\begin{example} \label{ex:running3}
Consider again the program of Example~\ref{ex:running} and the
initial query: $\tt \tuple{p(X,Y);\id}\sep\nil$. In order to 
reach the first  computed answer, $\tt \{A/b,B/b\}$, we
perform the derivation shown in Figure~\ref{fig:example}.

\begin{figure}[p]
$
\begin{array}{l}
 \tt \tuple{p(A,B);\id}\sep\red{\nil}\\

  \fto_{\sf choice}  \tt \tuple{p(A,B);\id}^{\ell_1}
   \sep \red{[\choice(1)]}  \\
   
  \fto_{\sf unfold}  \tt \tuple{q(A),r(A,B),\blue{ret(p(A,B))};\id}
   \sep \red{[\unf(p(A,B),\id,\ell_1),\choice(1)]}\\  

  \fto_{\sf choice}  \tt \tuple{q(A),r(A,B),\blue{ret(p(A,B))};\id}^{\ell_2}
   \midd \tuple{q(A),r(A,B),\blue{ret(p(A,B))};\id}^{\ell_3}\\
   \hspace{9ex}\tt \midd
 \tuple{q(A),r(A,B),\blue{ret(p(A,B))};\id}^{\ell_4}
    \sep \red{[\choice(3),\unf(p(A,B),\id,\ell_1),\choice(1)]}\\  

  \fto_{\sf unfold}  \tt \tuple{\blue{ret(q(A))},r(A,B),\blue{ret(p(A,B))};\{A/a\}}
 \midd \tuple{q(A),r(A,B),\blue{ret(p(A,B))};\id}^{\ell_3}\\
 \hspace{9ex}\tt \midd
 \tuple{q(A),r(A,B),\blue{ret(p(A,B))};\id}^{\ell_4}\\  
  \hspace{9ex} \tt \sep\: \red{[\unf(q(A),\id,\ell_2),\choice(3),\unf(p(A,B),\id,\ell_1),\choice(1)]}\\  
  
  \fto_{\sf exit}  \tt \tuple{r(A,B),\blue{ret(p(A,B))};\{A/a\}}
 \midd \tuple{q(A),r(A,B),\blue{ret(p(A,B))};\id}^{\ell_3}\\
 \hspace{9ex}\tt \midd
 \tuple{q(A),r(A,B),\blue{ret(p(A,B))};\id}^{\ell_4}\\  
  \hspace{9ex} \tt \sep\: \red{[\exit(q(A)),
  \unf(q(A),\id,\ell_2),\choice(3),\unf(p(A,B),\id,\ell_1),\choice(1)]}\\  
  
  \fto_{\sf choice\_fail}  \tt \tuple{\fail,\blue{ret(p(A,B))};\{A/a\}}
 \midd \tuple{q(A),r(A,B),\blue{ret(p(A,B))};\id}^{\ell_3}\\
 \hspace{9ex}\tt \midd
 \tuple{q(A),r(A,B),\blue{ret(p(A,B))};\id}^{\ell_4}\\  
  \hspace{9ex} \tt \sep\: \red{[\fail(r(A,B)),\exit(q(A)),
  \unf(q(A),\id,\ell_2),\choice(3),\unf(p(A,B),\id,\ell_1),\choice(1)]}\\  
  
  \fto_{\sf backtrack}  \tt 
  \tuple{q(A),r(A,B),\blue{ret(p(A,B))};\id}^{\ell_3}\midd
 \tuple{q(A),r(A,B),\blue{ret(p(A,B))};\id}^{\ell_4}\\  
  \hspace{9ex} \tt \sep\: \red{[\backtrack(ret(p(A,B)),\{A/a\}),\fail(r(A,B)),\exit(q(A)),
  \unf(q(A),\id,\ell_2),\choice(3),}\\
   \hspace{11ex} \tt \red{\unf(p(A,B),\id,\ell_1),\choice(1)]}\\  
  
  \fto_{\sf unfold}  \tt 
  \tuple{\blue{\ret(q(A))},r(A,B),\blue{ret(p(A,B))};\{A/b\}} \midd
 \tuple{q(A),r(A,B),\blue{ret(p(A,B))};\id}^{\ell_4}\\  
  \hspace{9ex} \tt \sep\: \red{[\unf(q(A),\id,\ell_3),\backtrack(ret(p(A,B)),\{A/a\}),\fail(r(A,B)),\exit(q(A)),}\\
  \hspace{11ex} \tt \red{\unf(q(A),\id,\ell_2),\choice(3),
   \unf(p(A,B),\id,\ell_1),\choice(1)]}\\  
  
  \fto_{\sf exit}  \tt 
  \tuple{r(A,B),\blue{ret(p(A,B))};\{A/b\}}\midd
 \tuple{q(A),r(A,B),\blue{ret(p(A,B))};\id}^{\ell_4}\\  
  \hspace{9ex} \tt \sep\: \red{[\exit(q(A)),\unf(q(A),\id,\ell_3),\backtrack(ret(p(A,B)),\{A/a\}),\fail(r(A,B)),\exit(q(A)),}\\
  \hspace{11ex} \tt \red{\unf(q(A),\id,\ell_2),\choice(3),
   \unf(p(A,B),\id,\ell_1),\choice(1)]}\\  

  \fto_{\sf choice}  \tt 
  \tuple{r(A,B),\blue{ret(p(A,B))};\{A/b\}}^{\ell_5}
  \midd \tuple{r(A,B),\blue{ret(p(A,B))};\{A/b\}}^{\ell_6}\\
 \hspace{9ex}\tt \midd
 \tuple{q(A),r(A,B),\blue{ret(p(A,B))};\id}^{\ell_4}\\  
  \hspace{9ex} \tt \sep\: \red{[\choice(2),\exit(q(A)),\unf(q(A),\id,\ell_3),\backtrack(ret(p(A,B)),\{A/a\}),\fail(r(A,B)),}\\
  \hspace{11ex} \tt \red{\exit(q(A)),\unf(q(A),\id,\ell_2),\choice(3),
   \unf(p(A,B),\id,\ell_1),\choice(1)]}\\  

  \fto_{\sf unfold}  \tt 
  \tuple{\blue{ret(r(A,B))},\blue{ret(p(A,B))};\{A/b,B/b\}}
  \midd \tuple{r(A,B),\blue{ret(p(A,B))};\{A/b\}}^{\ell_6}\\
 \hspace{9ex}\tt \midd
 \tuple{q(A),r(A,B),\blue{ret(p(A,B))};\id}^{\ell_4}\\  
  \hspace{9ex} \tt \sep\: \red{[\unf(r(A,B),\{A/b\},\ell_5),\choice(2),\exit(q(A)),\unf(q(A),\id,\ell_3),}\\
  \hspace{11ex} \tt \red{\backtrack(ret(p(A,B)),\{A/a\}),\fail(r(A,B)),\exit(q(A)),\unf(q(A),\id,\ell_2),\choice(3),}\\
   \hspace{11ex} \tt \red{\unf(p(A,B),\id,\ell_1),\choice(1)]}\\  
  
  \fto_{\sf exit}  \tt 
  \tuple{\blue{ret(p(A,B))};\{A/b,B/b\}}
  \midd \tuple{r(A,B),\blue{ret(p(A,B))};\{A/b\}}^{\ell_6}\\
 \hspace{9ex}\tt \midd
 \tuple{q(A),r(A,B),\blue{ret(p(A,B))};\id}^{\ell_4}\\  
  \hspace{9ex} \tt \sep\: \red{[\exit(r(A,B)),\unf(r(A,B),\{A/b\},\ell_5),\choice(2),\exit(q(A)),\unf(q(A),\id,\ell_3),}\\
  \hspace{11ex} \tt \red{\backtrack(ret(p(A,B)),\{A/a\}),\fail(r(A,B)),\exit(q(A)),\unf(q(A),\id,\ell_2),\choice(3),}\\
   \hspace{11ex} \tt \red{\unf(p(A,B),\id,\ell_1),\choice(1)]}\\  

  \fto_{\sf exit}  \tt 
  \tuple{\true;\{A/b,B/b\}}
  \midd \tuple{r(A,B),\blue{ret(p(A,B))};\{A/b\}}^{\ell_6} \midd
 \tuple{q(A),r(A,B),\blue{ret(p(A,B))};\id}^{\ell_4}\\  
  \hspace{9ex} \tt \sep\: \red{[\exit(p(A,B)),\exit(r(A,B)),\unf(r(A,B),\{A/b\},\ell_5),\choice(2),\exit(q(A)),}\\
  \hspace{11ex} \tt \red{\unf(q(A),\id,\ell_3),\backtrack(ret(p(A,B)),\{A/a\}),\fail(r(A,B)),\exit(q(A)),}\\
   \hspace{11ex} \tt \red{\unf(q(A),\id,\ell_2),\choice(3),\unf(p(A,B),\id,\ell_1),\choice(1)]}\\  

\end{array}
$
\caption{Example derivation with the reversible (forward) semantics.} \label{fig:example}
\end{figure}
\end{example}
It is worthwhile to observe that the drawbacks of \cite{Vid20}
mentioned before are now overcome by using substitutions with
the answer computed so far, together with a deterministic semantics 
where backtracking is dealt with explicitly.

Trivially, the instrumented semantics of 
Figure~\ref{fig:forward} is a conservative extension of
the deterministic semantics of Figure~\ref{fig:deterministic}
since the rules impose no additional condition. The only
difference is the addition of atoms $\ret(A)$ that mark
the exit of a program clause. In the following, given two
states, $S,S'$, we let $S\sim S'$ if they are equal after
removing all atoms of the form $\ret(A)$.

\begin{theorem}
  Let $Q$ be an initial state. Then, $Q \to^\ast S$ iff
  $Q\sep\nil \fto^\ast S'\sep\his$ such that $S\sim S'$
  for some history $\his$, 
  up to variable renaming.
\end{theorem}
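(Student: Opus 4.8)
The plan is to prove the two implications separately, each by induction on the length of the given derivation. Writing $\mathsf{strip}$ for the map that deletes every $\ret$-atom from a state, we have $S\sim S'$ iff $\mathsf{strip}(S)=\mathsf{strip}(S')$, so $\sim$ is an equivalence; rule $\mathsf{exit}$ only removes a leading $\ret$-atom from the first query, so $S_1\sep\his_1\fto_{\mathsf{exit}}S_2\sep\his_2$ implies $\mathsf{strip}(S_1)=\mathsf{strip}(S_2)$, i.e.\ $\mathsf{exit}$ is ``invisible'' to $\sim$; and $\mathsf{strip}$ preserves clause labels, the leftmost non-$\ret$ atom of each query, the substitution components, and hence the values of $\clauses(A\theta,P)$, $\clause(\ell,P)$ and $\mgu(A\theta,H)$ appearing in the premises. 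A further preliminary, proved by a trivial invariant over the reversible rules, is that in every reachable reversible configuration a \emph{labelled} query is headed by a non-$\ret$ atom (labels are introduced only by $\mathsf{choice}$, whose premise forbids $A=\ret(A')$, and are consumed by $\mathsf{unfold}$). The phrase ``up to variable renaming'' is handled by strengthening the statement with a variable renaming $\rho$, kept as part of the inductive invariant and updated at each $\mathsf{unfold}$ step; since each semantics is deterministic modulo the choice of fresh variables in $\clause(\cdot,P)$, this is routine and I will not spell it out (cf.\ the footnote accompanying the soundness theorem).

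For ``$Q\to^\ast S$ implies $Q\sep\nil\fto^\ast S'\sep\his$ with $S\sim S'$'', I induct on the number of $\to$-steps; the base case is immediate. For the step $Q\to^\ast S_1\to S$, the induction hypothesis yields $Q\sep\nil\fto^\ast S_1'\sep\his_1$ with $S_1\sim S_1'$. The main technical ingredient is a \emph{catch-up lemma}: from such an $S_1'$ one reaches, by a (possibly empty) sequence of $\mathsf{exit}$-steps, a state $S_1''$ with $S_1''\sim S_1$ whose leftmost query has exactly the same leading symbol (a program atom, or $\fail$, or $\true$), the same substitution, and the same label, if any, as the leftmost query of $S_1$. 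This follows by inner induction on the number of $\ret$-atoms preceding the first non-$\ret$ position of the leftmost query of $S_1'$: a labelled query has no leading $\ret$-atom by the preliminary invariant (so $S_1''=S_1'$); an unlabelled query with a leading $\ret$-atom allows $\mathsf{exit}$, so we recurse; and otherwise $S_1''=S_1'$ already works since $\mathsf{strip}$ leaves the literal first atom unchanged. Given $S_1''$, the reversible rule mirroring the deterministic rule used in $S_1\to S$ applies to $S_1''$ — its premises coincide with the deterministic ones by the observations above, and the extra premise $A\neq\ret(A')$ is automatic after catch-up — producing $S'\sep\his$ with $\mathsf{strip}(S')=\mathsf{strip}(S)$; for $\mathsf{unfold}$ one uses that the inserted $\ret(A)$ is deleted by $\mathsf{strip}$.

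For the converse I induct on the number of $\fto$-steps and examine the last one, $Q\sep\nil\fto^\ast S_1'\sep\his_1\fto S'\sep\his$, with $Q\to^\ast S_1$ and $S_1\sim S_1'$ from the induction hypothesis. If this step is $\mathsf{exit}$, then $\mathsf{strip}(S')=\mathsf{strip}(S_1')=\mathsf{strip}(S_1)$, so $S=S_1$ works with no new $\to$-step. Otherwise it is one of $\mathsf{backtrack}$, $\mathsf{next}$, $\mathsf{choice}$, $\mathsf{choice\_fail}$, $\mathsf{unfold}$; since $\mathsf{strip}$ preserves leftmost non-$\ret$ atoms, labels, substitutions and the relevant auxiliary values, the corresponding deterministic rule is enabled in $S_1$ and a single $\to$-step from $S_1$ reaches a state $S$ with $\mathsf{strip}(S)=\mathsf{strip}(S')$ (again, for $\mathsf{unfold}$ the only difference is the $\ret(A)$ that $\mathsf{strip}$ removes). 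The premises $A\neq\ret(A')$ present in the reversible rules are precisely what guarantees that the symbol $\mathsf{strip}$ exposes as leftmost in $S_1$ is the symbol the reversible rule acted on, so the chosen deterministic rule is the right one.

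I expect the real effort to be organisational rather than conceptual: stating the catch-up lemma and the ``$\mathsf{exit}$ is stuttering'' step precisely enough that the two inductions close without circularity, and keeping the variable-renaming bookkeeping honest across the $\mathsf{unfold}$ cases, where fresh variables are chosen independently in the two derivations. There is no deep obstacle, because the reversible semantics is a conservative extension of the deterministic one in the strong sense that every premise of a reversible rule either coincides with a premise of the matching deterministic rule or constrains $\ret$-atoms, which $\sim$ ignores, while the single genuinely new rule, $\mathsf{exit}$, is $\sim$-transparent.
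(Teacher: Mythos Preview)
The paper does not actually give a proof of this theorem: the sentence preceding the statement asserts that the instrumented semantics is ``trivially'' a conservative extension of the deterministic one ``since the rules impose no additional condition'' apart from the insertion of $\ret(A)$ atoms, and the theorem is then stated without further argument. Your proposal is therefore not so much a different route as a rigorous expansion of what the paper leaves implicit.

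Your argument is correct. The two ingredients you isolate---that $\mathsf{exit}$ is $\sim$-transparent (stuttering), and the catch-up lemma allowing a finite prefix of $\mathsf{exit}$ steps to expose the same leading atom as the deterministic state---are precisely what is needed to turn the paper's one-line justification into an actual proof. The auxiliary invariant that labelled queries are never headed by a $\ret$-atom is also necessary and correctly argued. The only place to be slightly careful is the forward direction when the deterministic step is $\mathsf{backtrack}$ or $\mathsf{next}$: here the leftmost query of $S_1'$ may still contain trailing $\ret$-atoms after the leading $\fail$ or $\true$, but since these rules discard the whole leftmost query and $\sim$ already equates the remaining tails, no catch-up is needed and your case analysis still goes through. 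In short, your proposal supplies the details the paper omits, with no substantive divergence in strategy.
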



Let us now consider backward steps. Here, our goal is to
be able to explore a given derivation backwards. For this
purpose, we introduce a  backward 
operational semantics that is essentially obtained
by switching the configurations in each rule of the
forward semantics, and removing all 
unnecessary premises. The resulting backward semantics
is shown in Figure~\ref{fig:backward}.
Let us just add that, in rule $\ol{\sf unfold}$, we
use the auxiliary function $\body(\ell,P)$ to denote
the body of clause labelled with $\ell$ in program $P$,
and, thus, $|\body(\ell,P)|$ represents the number
of atoms in the body of this clause.\footnote{
As is common, $|S|$ denotes the cardinality of the set 
or sequence $S$.
} This information
was stored explicitly in our previous approach  
\cite{Vid20}.

\begin{figure}[t]
  \[
  \hspace{-2ex}\begin{array}{r@{~~~}l}
    \mathsf{(\ol{backtrack})} & 
        S\sep
        \backtrack(\cB,\theta)
        \cons\his 
        \bto     
        \tuple{\fail,\cB;\theta}\midd 
        S\sep\his 
        \\[2ex]

    \mathsf{(\ol{next})} & 
    S\sep \nextsol(\theta)\cons\his
        \bto
        \tuple{\true;\theta}\midd S\sep
        \his 
        \\[2ex]

     \textsf{(\underline{choice})} &  
          \tuple{A,\cB;\theta}^{\ell_1} \midd 
          \ldots\midd
          \tuple{A,\cB;\theta}^{\ell_m}\midd  S\sep
          \choice(m)\cons\his
	\bto
     \tuple{A,\cB;\theta}\midd S\sep
        \his 
      \\[2ex]

     \textsf{(\underline{choice\_fail})} &  
      \tuple{\fail,\cB;\theta}\midd  S\sep
          \choicefail(A)\cons\his 
     \bto
     \tuple{A,\cB;\theta}\midd S\sep
         \his 
       \\[2ex]

    \mathsf{(\underline{unfold})} & 
       \tuple{B_1,\ldots,B_n,\ret(A),\cB;\theta\sigma}\midd S\sep
          \unf(A,\theta,\ell)\cons\his
        \bto
        \tuple{A,\cB;\theta}^{\ell}
        \midd S\sep\his
        \\[.7ex]
        & \mbox{where}~|\body(\ell,P)| = n 
       \\[2ex]

    \mathsf{(\underline{exit})} & 
    \tuple{\cB;\theta}
    \midd S\sep \exit(A)\cons\his
    \bto
      \tuple{\ret(A),\cB;\theta}
        \midd S\sep\his 
    \\ 
    \end{array}
    \]
  \caption{Backward reversible semantics} 
  \label{fig:backward}
\end{figure}

\begin{example} \label{ex:running4}
  If we consider the configurations of Figure~\ref{fig:example} from
  bottom to top, they constitute a backward derivation using the
  rules of Figure~\ref{fig:backward}.
\end{example}
The following result states the reversibility of our semantics:

\begin{lemma}
  Let $\cC,\cC'$ be configurations. If $\cC \fto \cC'$, then $\cC'\bto \cC$,
  up to variable renaming.
\end{lemma}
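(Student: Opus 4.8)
The claim is a one-step reversibility statement, so the natural strategy is a straightforward case analysis on which rule of the forward semantics (Figure~\ref{fig:forward}) justifies the step $\cC \fto \cC'$, and in each case exhibit the matching backward rule from Figure~\ref{fig:backward} that sends $\cC'$ back to $\cC$. There are six forward rules ($\mathsf{backtrack}$, $\mathsf{next}$, $\mathsf{choice}$, $\mathsf{choice\_fail}$, $\mathsf{unfold}$, $\mathsf{exit}$), and each one pushes exactly one history event onto $\his$; the backward semantics is designed so that each backward rule is triggered precisely by that event being on top of the history. So the plan is: fix a forward step, read off $\cC$ and $\cC'$ from the matching rule instance, observe that $\cC'$ has the shape in the left-hand side of the corresponding barred rule (with the just-pushed event at the head of its history), and check that firing that barred rule reproduces $\cC$.

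First I would handle the ``easy'' rules — $\mathsf{backtrack}$, $\mathsf{next}$, $\mathsf{choice}$, $\mathsf{choice\_fail}$, $\mathsf{exit}$ — where the backward rule is a literal syntactic mirror: e.g.\ if $\cC = \tuple{\fail,\cB;\theta}\midd\tuple{A,\cB';\theta'}\midd S\sep\his$ and $\cC' = \tuple{A,\cB';\theta'}\midd S\sep \backtrack(\cB,\theta)\cons\his$ by $\mathsf{backtrack}$, then $\ol{\mathsf{backtrack}}$ applied to $\cC'$ reads off $\cB$ and $\theta$ from the head event $\backtrack(\cB,\theta)$ and restores exactly $\cC$. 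The same trivial pattern-matching argument works for $\mathsf{next}$/$\ol{\mathsf{next}}$ (the event $\nextsol(\theta)$ carries $\theta$), $\mathsf{choice}$/$\ol{\mathsf{choice}}$ (the event $\choice(m)$ carries $m$, so we know how many labelled copies to collapse, and they are all copies of the same query $\tuple{A,\cB;\theta}$), $\mathsf{choice\_fail}$/$\ol{\mathsf{choice\_fail}}$ (the event $\choicefail(A)$ carries $A$), and $\mathsf{exit}$/$\ol{\mathsf{exit}}$ (the event $\exit(A)$ carries $A$, so $\ret(A)$ can be reinserted). In each of these, no substitution or unifier is ever discarded, so there is nothing subtle.

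The one case deserving care is $\mathsf{unfold}$. Here $\cC = \tuple{A,\cB;\theta}^{\ell}\midd S\sep\his$ and $\cC' = \tuple{B_1,\ldots,B_n,\ret(A),\cB;\theta\sigma}\midd S\sep \unf(A,\theta,\ell)\cons\his$, where $\clause(\ell,P) = H\leftarrow B_1,\ldots,B_n$ and $\sigma = \mgu(A\theta,H)$. Applying $\ol{\mathsf{unfold}}$ to $\cC'$: the head event $\unf(A,\theta,\ell)$ supplies $A$, $\theta$ and $\ell$; from $\ell$ and $P$ we recover $|\body(\ell,P)| = n$, which tells us exactly how many atoms ($B_1,\ldots,B_n$) to strip off the front of the query before the $\ret(A)$ marker; we then recompute $\sigma = \mgu(A\theta,H)$ from the recovered data, and the backward rule rebuilds $\tuple{A,\cB;\theta}^{\ell}\midd S\sep\his = \cC$. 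The key point is that the forward rule never needs to be \emph{inverted} — the composed answer $\theta$ before the step is stored verbatim in the $\unf$ event, so we sidestep the irreversibility of unification entirely (this is exactly the design choice motivated in Section~\ref{sec:reversible}); we only need to \emph{re-derive} $\sigma$, which is deterministic given $A\theta$, $\ell$ and the clause renaming convention.

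**Main obstacle.** The only genuine care point is the ``up to variable renaming'' qualifier, which enters through $\clause(\ell,P)$: both the forward $\mathsf{unfold}$ and the backward $\ol{\mathsf{unfold}}$ rename the clause $\ell$ apart with \emph{fresh} variables, and these need not be literally the same fresh variables. Hence $\cC' \bto \cC$ holds only up to renaming the variables introduced by that clause instance — which is precisely what the statement claims, and which is harmless since those variables are fresh on both sides. I would make this explicit by fixing, as the paper already does in a footnote for an earlier theorem, that the same renamed clause is used in the forward and backward directions; then equality becomes literal for that case, and the ``up to variable renaming'' clause is only needed to cover this bookkeeping. No induction is required — the lemma is purely about a single step — so beyond this renaming remark the proof is a finite, mechanical rule-by-rule verification.
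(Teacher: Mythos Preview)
Your proposal is correct and follows essentially the same approach as the paper: a case distinction on the forward rule applied, observing that each backward rule is keyed by a distinct history event and syntactically inverts the corresponding forward rule. The paper's own proof is considerably terser (two sentences), so your version is a more detailed elaboration of the same argument; one minor remark is that in the $\ol{\mathsf{unfold}}$ case you need not recompute $\sigma$ at all, since the backward rule only uses $A$, $\theta$, $\ell$ (all stored in the event) and $n=|\body(\ell,P)|$ to rebuild the source configuration, discarding $\theta\sigma$ entirely.
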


\begin{proof}
  The claim follows by a simple case distinction on the applied rule and
  the fact that the backward semantics of Figure~\ref{fig:backward} is
  trivially deterministic since each rule requires a different element on
  the top of the history.
\end{proof}
In principle, one could also prove the opposite direction, i.e., that
$\cC'\bto \cC$ implies $\cC \fto \cC'$, by requiring that $\cC'$ is not
an arbitrary configuration but a ``legal'' one, i.e., a configuration that
is reachable by a forward derivation starting from some initial
configuration.

The above result could be straightforwardly extended to derivations as follows:

\begin{theorem}
  Let $\cC,\cC'$ be configurations. If $\cC \fto^\ast \cC'$, then $\cC'\bto^\ast \cC$,
  up to variable renaming.  
\end{theorem}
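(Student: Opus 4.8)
The plan is to induct on the number $n$ of steps in the forward derivation $\cC \fto^\ast \cC'$, using the single-step reversibility lemma above as the basic step. The base case $n=0$ is trivial: then $\cC'=\cC$ and the empty backward derivation already witnesses $\cC' \bto^\ast \cC$.

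For the inductive step I would decompose a derivation of $n+1$ steps as $\cC \fto^\ast \cC_1 \fto \cC'$, so that the prefix $\cC \fto^\ast \cC_1$ has $n$ steps. Applying the lemma to the last step gives $\cC' \bto \cC_1'$ for some $\cC_1' \approx \cC_1$, and applying the induction hypothesis to the prefix gives $\cC_1 \bto^\ast \cC''$ for some $\cC'' \approx \cC$. It then remains to splice these two pieces together: from $\cC_1' \approx \cC_1$ and $\cC_1 \bto^\ast \cC''$ one obtains a backward derivation $\cC_1' \bto^\ast \cC'''$ with $\cC''' \approx \cC'' \approx \cC$, and prepending the step $\cC' \bto \cC_1'$ yields $\cC' \bto^\ast \cC'''$, which is exactly the claim up to variable renaming.

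The only point requiring a little care is this splicing step, i.e., transporting the backward derivation supplied by the induction hypothesis along the variant $\cC_1' \approx \cC_1$ produced by the lemma. For this I would record the auxiliary fact that $\bto$, and hence $\bto^\ast$, is closed under variable renaming: inspecting Figure~\ref{fig:backward}, no backward rule introduces fresh variables — the reconstructed atom $A$, the residual query $\cB$, and the substitution $\theta$ are always taken verbatim from the history event on top of $\his$, and rule $\ol{\sf unfold}$ uses only the renaming-invariant quantity $|\body(\ell,P)|$ to locate where the clause body ends. Hence $\cC_1 \bto \cC_2$ implies $\cC_1\rho \bto \cC_2\rho$ for every variable renaming $\rho$, and a one-line induction lifts this to $\bto^\ast$.

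I do not expect a genuine obstacle here: the substance of the theorem is already contained in the single-step lemma, and the only new ingredient is the routine bookkeeping of variable renamings described above. (If the single-step lemma were stated as an exact equality rather than up to variable renaming, even that would be superfluous and the proof would collapse to a bare induction on derivation length.)
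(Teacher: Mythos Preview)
Your proposal is correct and is precisely the argument the paper has in mind: the paper does not actually give a proof of this theorem, it merely states that ``the above result could be straightforwardly extended to derivations'' and then records the statement. Your induction on the length of the forward derivation, using the single-step lemma at each stage, is exactly that straightforward extension; in fact you supply more than the paper does by making explicit the bookkeeping needed to transport the inductive backward derivation along the variant $\cC_1' \approx \cC_1$ and by observing that the backward rules of Figure~\ref{fig:backward} introduce no fresh variables and are therefore stable under renaming.
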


\section{A Reversible Debugger for Prolog}
\label{sec:rever}

In this section, we present the design of a reversible debugger for 
Prolog. It is based on the standard 4-port tracer introduced by Byrd
\cite{Byr80,CM94}. The ports are $\call$ (an atom is called), $\exit$ (a call is 
successfully completed), $\redo$ (backtracking requires trying again
some call), and $\fail$ (an atom matches no clause). In contrast to
standard debuggers that can only explore a computation forward, our
reversible debugger allows us to move back and forth.

The implemented debugger, \textsf{rever}, is publicly
available from \url{https://github.com/mistupv/rever}. It 
can be used in two modes:
\begin{itemize}
\item \textit{Debug mode}. In this case, execution proceeds silently
(no information is shown) until the execution of a special predicate
$\tt rtrace/0$ is reached (if any).  The user can include a call to 
this predicate in the source program in order to start tracing the
computation (i.e., it behaves as \texttt{trace/0} in most Prolog systems). 
Tracing also starts if an \emph{exception} is produced during the evaluation
of a query.
This mode is invoked with a call of the form $\tt rdebug(query)$, where
$\tt query$ is the initial query whose execution we want to explore.

\item \textit{Trace mode}. In this mode, port information is shown from
the beginning. One can invoke the trace mode with $\tt rtrace(query)$. 
Note that it is equivalent to calling $\tt rdebug((rtrace,query))$.
\end{itemize}
Our reversible debugger essentially implements the transition rules in 
Figures~\ref{fig:forward} and \ref{fig:backward}. As the reader may
have noticed,  some configurations in Figure~\ref{fig:forward} are labeled with a 
subscript: it denotes the output of a given port. Moreover, there is
an additional label in rule $\nextsol$ which denotes that, at this
point, an answer must be shown to the user. 

In tracing mode,  every time that a configuration with a subscript is reached, 
the execution stops, shows the corresponding port information,
and waits for the user to press some key. We basically consider the following 
keys: $\downarrow$ (or Enter) proceeds with the next (forward) step;
$\uparrow$ performs a backward step; $s$ (for skip) shows the port
information without waiting for any user interaction; $t$ enters the tracing mode; 
$q$ quits the debugging session.

For instance, given the initial call $\tt rtrace(p(A,B))$, and according to
the forward derivation shown in Figure~\ref{fig:example}, our debugger
displays the sequence shown in Figure~\ref{fig:rever} (a).
Now, if one presses ``$\uparrow$" repeatedly, the sequence displayed in 
Figure~\ref{fig:rever} (b) is shown. Note that ports are prefixed by
the symbol ``$\:\hat{~}\:$'' in backward derivations. Of course, the user can move
freely back and forth.

\begin{figure}[t]
\centering
$
\begin{tabular}{l@{~~~~~~}l}
\tt \green{Call:} p(A,B) &  \tt $\hat{~}$\green{Exit:} p(b,b)\\
\tt \green{Call:} q(A) &  \tt $\hat{~}$\green{Exit:} r(b,b)\\
\tt \green{Exit:} q(a) &  \tt $\hat{~}$\green{Call:} r(b,B)\\
\tt \green{Call:} r(a,B) &  \tt $\hat{~}$\green{Exit:} q(b)\\
\tt \red{Fail:} r(a,B) &  \tt $\hat{~}$\olive{Redo:} q(A)\\
\tt \olive{Redo:} q(A) &  \tt $\hat{~}$\red{Fail:} r(a,B)\\
\tt \green{Exit:} q(b) &  \tt $\hat{~}$\green{Call:} r(a,B)\\
\tt \green{Call:} r(b,B) &  \tt $\hat{~}$\green{Exit:} q(a)\\
\tt \green{Exit:} r(b,b) &  \tt $\hat{~}$\green{Call:} q(A)\\
\tt \green{Exit:} p(b,b) &  \tt $\hat{~}$\green{Call:} p(A,B)\\
\tt **Answer: A = b, B = b\\[1ex]
\hspace{8ex}(a) & \hspace{8ex}(b)
\end{tabular}
$
\caption{Trace Example with \textsf{rever}} \label{fig:rever}
\end{figure}

Reversible debugging might be especially useful when we have an
execution that produces some exception at the end. 
With our tool, one can easily inspect the execution backwards 
from the final state that produced the error. 

Let us mention that,  in order to avoid the use of a ground representation and
having to implement all basic operations (mgu, substitution application
and composition, etc), substitutions are represented in its
equational form. E.g., substitution  $\tt \{A/a,B/b\}$
is represented by $\tt A=a,B=b$. This equational representation of a
mgu can be easily obtained by using the predefined predicate 
$\tt unify/3$.  This representation is much more efficient than
storing pairs of atoms (as in \cite{Vid20}), that must be unified
once and again at each execution step.  

Finally, let us mention that, despite the simplicity of the implemented 
system (some 500 lines of code in SWI Prolog), our debugger
is able to deal with medium-sized programs (e.g., it has been used to
debug the debugger itself).

\section{Related Work}
\label{sec:relwork}

The closest approach is clearly the preliminary version of this work in \cite{Vid20}.
There are, however, several significant differences: \cite{Vid20} presents
a reversible version of the usual, nondeterministic SLD resolution. Therefore,
backtracking steps cannot be undone. This is improved in this paper by
considering a deterministic semantics that models the traversal of the
complete SLD tree. Moreover, \cite{Vid20} considers a simple but 
very inefficient representation for the history, which is greatly improved
in this paper. Finally, we provide proofs of some formal properties for
our reversible semantics, as well as a publicly available implementation 
of the debugger, the system \textsf{rever}.

Another close approach we are aware of is that of Opium \cite{Duc99},
which introduces a trace query language for inspecting and analyzing
trace histories. In this tool, the trace history of the considered
execution is stored in a database, which is then used for trace
querying. Several analysis can then be defined in Prolog itself by
using a set of given primitives to explore the trace elements.
In contrast to our approach, Opium is basically a so-called
``post-mortem" debugger that allows one to analyze the 
trace of an execution. Therefore, the goal is different from that
of this paper.

\section{Concluding Remarks and Future Work}
\label{sec:conclusion}

We have proposed a novel reversible debugging scheme for
logic programs by defining an appropriate Landauer embedding
for a deterministic operational semantics.  Essentially, the states
of the semantics are extended with a \emph{history} that keeps
track of all the information which is needed to be able to undo the
steps of a computation. We have proved a number of formal
properties for our reversible semantics.
Moreover, the ideas have been put
into practice in the reversible debugger \textsf{rever}, which is
publicly available  from \url{https://github.com/mistupv/rever}. 
Our preliminary experiments with the debugger
have shown promising results.

As for future work, we are currently working on extending the 
debugger in order to cope with negation and the cut.   Also, we plan
to define a more compact representation for the history, so that
it can scale up better to larger programs and derivations.

\subsection*{Acknowledgements}

The author gratefully acknowledges the editors, John Gallagher,
Roberto Giacobazzi and Pedro L\'opez-Garc\'{\i}a, for the opportunity
to contribute to this volume, dedicated to Manuel Hermenegildo on
the occasion of his 60th birthday.

\bibliographystyle{splncs04}

\end{document}